\newtheorem{theorem}{Theorem}
\newtheorem{definition}{Definition}
\newenvironment{proof}[1][Proof]{\emph{#1.} }{\  \hfill $\square $ \vspace{5 pt}}
\tikzset{myptr/.style={decoration={markings,mark=at position 1 with %
       {\arrow[scale=2,>=stealth]{>}}},postaction={decorate}}}
\newcommand*\samethanks[1][\value{footnote}]{\footnotemark[#1]}
\DeclareFontFamily{T1}{calligra}{}
\DeclareFontShape{T1}{calligra}{m}{n}{<->s*[1.44]callig15}{}
\DeclareMathAlphabet\mathcalligra   {T1}{calligra} {m} {n}
\begin{document}

\title{  Nash implementation in a many-to-one matching market
\thanks{We thank Agustín Bonifacio and Pablo Neme for their very detailed comments. We acknowledge financial support
from UNSL through grants 032016 and 030320, from Consejo Nacional
de Investigaciones Cient\'{\i}ficas y T\'{e}cnicas (CONICET) through the grant
PIP 112-200801-00655, and from Agencia Nacional de Promoción Cient\'ifica y Tecnológica through grant PICT 2017-2355.}}


\author{Noelia Juarez\thanks{Instituto de Matemática Aplicada San Luis (UNSL-CONICET) and Departamento de
Matemática, Universidad Nacional de San Luis, San Luis, Argentina. Emails: \texttt{nmjuarez@unsl.edu.ar} (N. Juarez), 
\texttt{pbmanasero@unsl.edu.ar} (P. B. Manasero) and \texttt{joviedo@unsl.edu.ar} (J. Oviedo).} 
 \and Paola B. Manasero\samethanks[2] \and Jorge Oviedo\samethanks[2]}

\date{\today}
\maketitle

\begin{abstract}
 In a many-to-one matching market, we analyze the matching game induced by a stable rule when firms' choice function satisfy substitutability. We show that any stable rule implements the individually rational correspondence in Nash equilibrium when both sides of the market play strategically. Moreover, when only workers play strategically and firms' choice functions satisfy the law of aggregate demand, we show that the firm-optimal stable rule implements the stable correspondence in Nash equilibrium.

\bigskip

\noindent \emph{JEL classification:} C78, D47.\bigskip

\noindent \emph{Keywords:} Stable matchings, Nash equilibrium, substitutable preferences, matching game. 

\end{abstract}

\section{Introduction}

In this paper, we study a many-to-one matching market in which agents on one side of the market (which we call \emph{firms}) must be assigned to subsets of agents on the other side of the market (which we call \emph{workers}), and the only requirement on subsets of workers that each firm's choice function must satisfy is substitutability. \cite{kelso1982job} introduced this condition, which is the weakest requirement on firms' choice functions in order to guarantee the existence of stable matchings. A firm has substitutable choice functions if it wants to continue hiring a worker even if other workers become unavailable.  

In centralized markets, a board needs to collect the preferences and choice functions of all agents in order to produce a stable matching. Normally, agents are expected to behave strategically by not revealing their true preferences or their true choice functions in order to benefit themselves. When this is the case, the matching market becomes a matching game. 

A stable rule is a function that associates each stated strategy profile to a stable matching under this stated profile. To evaluate such matchings workers, and firms use their true preferences and their true choice functions, respectively. A market and a stable rule induce a matching game. In this game, the set of strategies for each worker is the set of all possible preferences that she could state. Similarly, the set of strategies for each firm is the set of all possible choice functions that it could state.

In this paper, the equilibrium concept we focus on is Nash equilibrium. In a Nash equilibrium, no agent improves by deviating from its initial chosen strategy, assuming the other agents keep their strategies unchanged.

 It is well known, for matching markets, that there is no stable rule for which truth-telling is a dominant strategy for all agents \citep[see][among others]{dubins1981machiavelli, roth1982economics, roth1985college, sotomayor1996admission, Sotomayor2012, martinez2004group, manaserooviedo2022}. That is, given the true preferences and a stable rule, at least one agent might benefit from misrepresenting her preferences regardless of what the rest of the agents state. Thus, stable matchings cannot be reached through  dominant ``truth-telling equilibria". The stability of equilibrium solutions under true preferences is expected to be affected when agents behave strategically.

Given a market, the question that arises is what are the rules that induce a matching game that allows us to implement stable matchings in Nash equilibrium. First, we study a matching game in which the players are all the agents. We show that any stable  rule implements, in Nash equilibrium, the individually rational matchings. Second, to implement stable matchings, we focus on another matching game. In some markets, such as school choice or labor markets, institutions are legally required to declare their true preferences (priorities or choice functions), i.e., their preferences are public. In these cases, individuals (students or workers) are expected to manipulate their preference lists to their advantage. This situation leads us to consider a matching game in which the players are only the workers.

 Furthermore, when firms’ choice functions satisfy, in addition to 
substitutability, the ``law of aggregate demand" (LAD, from now on),\footnote{This property is first studied by \citet{alkan2002class} under the name of ``cardinal monotonicity". See also \citet{hatfield2005matching}.} we show that the firm-optimal stable rule implements, in Nash equilibrium, the stable matchings.

The contribution of this paper is to generalize the approach presented by \citet{Sotomayor2008,Sotomayor2012} for the many-to-one matching market with responsive preferences (a more restrictive requirement than substitutability) to substitutable choice functions.
 
There is an extensive literature that focuses on studying the implementation of rules using Maskin's results as the main tool  \citep [see] [among others]{maskin77,maskin99,kara1996nash,kara1997implementation, ehlers2004monotonic, haake2009monotonicity}. In addition, to study implementation, the aforementioned authors analyze the relationship between stability, monotonicity, individual rationality, and Pareto efficiency.    \citet{maskin77,maskin99}  shows that a monotonicity condition (Maskin's monotonicity) is necessary for a rule to be Nash implementable. Also, he shows that Maskin's monotonicity and no veto power together are sufficient conditions for implementability. It is important to highlight that, unlike previous work, our results cannot be obtained through Maskin's implementation result since, although stable rules satisfy Maskin's monotonicity, they do not satisfy no veto power \citep [see] [for more details]{kara1996nash} .

For the one-to-one and  many-to-one markets with responsive preferences, \citet{kara1996nash,kara1997implementation} show that the stable rules are Nash implementable.
 \citet{ehlers2004monotonic} obtains positive implementation results in one-to-one markets when agents are allowed to have weak preferences. In a many-to-one market with contracts, \citet{haake2009monotonicity} show that stable rules are Maskin monotonic and implementable. All of the above articles demonstrate the implementability of stable rules using some implementation conditions; e.g. monotonicity \citet{maskin99}, essential monotonicity \citet{yamato} or the implementability condition of \citet{moorerepullo}. In contrast, we focus on studying a game and identifying the strategies that are  Nash equilibria of the game and that allow us to implement
stable solutions.
 
 In a one-to-one  market, \citet{ alcalde1996implementation} studies the design of specific mechanisms to implement stable solutions. He introduces two types of mechanisms. One implements the set of all stable matchings in undominated Nash equilibria; the other implements the
optimal stable matching for one of the two sides of the market via dominance resolvability. This last mechanism is the result of
the classic algorithm in matching theory, the Gale-Shapley mechanism.

Unlike us, \citet{ALCALDE2000294} analyze a two-stage game in a many-to-one market under substitutability. They focus on studying the notions of subgame perfect Nash equilibrium and strong subgame perfect Nash equilibrium. Under these notions of equilibrium, they implement the stable correspondence. Thus, their approach is tangential to ours. 

The rest of the paper is organized as follows. Section \ref{seccion de preliminares} presents the model and
some preliminaries. All the results of the paper are presented in Section \ref{S2}. First, we show that any stable rule implements the individually rational correspondence in Nash equilibrium. Second, assuming that only workers play strategically, we show that the firm-optimal stable rule implements, in Nash equilibrium, the stable correspondence when firms’ preferences
satisfy substitutability and LAD. Finally, concluding remarks are gathered in Section \ref{Concludings}.

\section{Model and preliminaries}\label{seccion de preliminares}
We consider a many-to-one matching market where there are two disjoint sets of agents: the set of \textit{workers} $W$ and the set of \textit{firms} $F$.  Each worker $w\in W$ has a strict preference relation $P_w$ over the individual firms and the prospect of being unmatched,  denoted by $\emptyset$.  For each $ w\in W $, $R_w$ is the weak preference over $ F $  associated with $P_w$. Each firm $f\in F$ has a choice function $C_f$ over the set of all subsets of $W$ that satisfies \textbf{substitutability}, i.e.,   for $S'\subseteq S \subseteq W$, we have $C_f(S)\cap S' \subseteq C_f(S')$.\footnote{Substitutability is equivalent to the following: for each $w\in W$ and each $S\subseteq W$ such that $w\in S$, $w\in C_f\left(S\right)$ implies that $w\in C_f\left(S'\cup \{w\}\right)$ for each $S' \subseteq S.$} In addition, we assume that $C_f$ satisfies $C_f(S')=C_f(S)$ whenever $C_f(S)\subseteq S' \subseteq S \subseteq W.$ This property is known in the literature as \textbf{consistency}. If $C_f$  satisfies substitutability and consistency, then it also satisfies \begin{equation}\label{propiedad de choice}
C_f\left(S\cup S'\right)=C_f\left(C_f\left(S\right)\cup S'\right)
\end{equation} for each pair of subsets $S$ and $S'$ of $W$.\footnote{This property is known in the literature as \emph{path independence} \citep[see][]{alkan2002class}.} 
Let $P=(P_w)_{w\in W}$ be the preference profile for all workers, and let $C=(C_f)_{f\in F}$ be the profile of choice functions for all firms. A \emph{many-to-one matching market} is denoted by $(W,F,P,C).$ Since the sets $F$ and $W$ are kept fixed throughout the paper, we often identify the market $(W,F,P,C)$ with the profile $(P,C)$.
Given the profile $P$, we consider that each $w\in W$ may
misrepresent her preferences $P _{w}$, by some 
preference $P'_w$.  We denote by $\left(P'_{w} ,P
_{-w}\right) $  the new profile with the misreport, where $P_{-w}$  is the subprofile obtained by removing  $P_{w}$  from $ P $ . We denote by $\mathcal{P}$  the set of all worker preference profiles.  We consider that each $f\in F$ may
misrepresent her choice function $C _{f}$, by some 
choice function $C' _{f}$.  We denote by $\mathcal{C}$  the set of all firms' choice functions profiles. Throughout this paper, we assume that $C$ is substitutable for each $ C\in \mathcal{C} $.


\begin{definition}
A \textbf{matching} $\mu$ is a function from $W\cup F$ into $2^{W\cup F}$ such that, for each $w\in W$ and  each $f\in F$:
\begin{enumerate}[(i)]
\item $\mu(w)\subseteq F$ with $|\mu(w)|\leq 1.$ 
\item $\mu(f)\subseteq W$.
\item $w\in \mu(f)$ if and only if $\mu(w)=\{f\}$.\footnote{Usually, we will omit the curly brackets. For instance, instead of condition (iii) we will write: ``$w\in \mu(f)$ if and only if $\mu(w)=f$''. No confusion will arise.} 
\end{enumerate}
\end{definition}

Let $ \mathcal{M} $ be the set of all matchings.
An agent $a\in W\cup  F$ is \textbf{matched} if $\mu(a) \neq \emptyset$, otherwise  $a$ is \textbf{unmatched}. A matching $\mu$ is \textbf{blocked by a worker $\boldsymbol{w}$} if $\emptyset P_w \mu(w)$; that is, worker $w$  would rather be unemployed than work for firm $\mu(w)$. Similarly, $\mu$ is \textbf{blocked by a firm $\boldsymbol{f}$} if $\mu(f)\neq C_f\left(\mu(f)\right)$; that is, firm $f$ wants to fire some workers in $\mu(f)$. A matching is \textbf{individually rational} if it is not blocked by any individual agent.  The set of individually rational matchings for market $(P,C)$ is denoted by $\boldsymbol{I(P,C)}.$  

A matching $\mu$ is \textbf{blocked by a firm-worker pair $\boldsymbol{(f,w)}$} if  $w \in C_f\left(\mu( f )\cup \{w\}\right),$  and $f P_w \mu( w )$; that is, if they are not matched by $\mu$, firm $f$ wants to hire $w$, and worker $w$ prefers firm $f$ to $\mu(w)$.  A matching $\mu$ is \textbf{stable} if it is individually rational and it is not blocked by any firm-worker pair. The set of stable matchings for market $(P,C)$ is denoted by $\boldsymbol{S(P,C)}.$ 

For each $ f\in F $, the choice function $ C_f $ induces a preference binary relation as follows: given two sets of workers $S,T\subseteq W$, we write  $\boldsymbol{S \succeq_{f} T}$ if and only if $S=C_f\left(S \cup T\right)$.\footnote{This relation $\succeq  $ was used in \citet{blair1988lattice}.} Furthermore, given two matchings $\mu$ and $\mu'$, we write    $\boldsymbol{\mu \succeq_{F} \mu'},$ whenever $\mu(f) \succeq_{f}\mu'(f)$ for each $f \in F.$ Similarly, we write $\boldsymbol{\mu R_{W} \mu'},$ whenever $\mu(w) R_w\mu'(w)$ for each $w \in W.$

The set of stable matchings under substitutable choice functions is very well structured. \citet{blair1988lattice} proves that this set has two lattice structures, one concerning $\succeq_F$ and the other one concerning $R_W$. Furthermore, it contains two distinctive matchings: the firm-optimal stable matching under $ (P,C) ,$ $\mu_F(P,C)$, and the worker-optimal stable matching under $ (P,C) $, $\mu_W(P,C)$. The matching $\mu_F$ is unanimously considered by all firms to be the best among all stable matchings and $\mu_W$ is unanimously considered by all workers to be the best among all stable matchings \cite[see][for more details]{roth1984evolution,blair1988lattice}.

In a seminal paper, \citet{gale1962college} introduce the Deferred Acceptance (\emph{DA}, from now on) algorithm, which computes the optimal stable matching for one side of the market. Later, the \emph{DA} algorithm is adapted to the many-to-one market when firms preferences are substitutable, by \citet{roth1992two}.

\section{Nash Implementation}\label{S2}
\subsection{The individually rational correspondence}

 Before introducing the matching game, we need some standard terms and notation.

A \textbf{social choice correspondence} $  \Psi: \mathcal{P}\times \mathcal{C}  \rightarrow 2^{\mathcal{M}}$ selects, for each market $ (P',C')\in \mathcal{P}\times \mathcal{C} $, a subset of matchings $ \Psi(P',C')\subseteq \mathcal{M}.  $ The \textbf{individually rational correspondence} $ \mathcal{I}:\mathcal{P}\times \mathcal{C}  \rightarrow 2^{\mathcal{M}} $ selects for each market $ (P',C')$, the set of individually rational matchings $ I(P',C') $. 
In this context, $P'\in \mathcal{P}$ is a strategy profile for all workers where for each $ w\in W $, $P'_w$ is  $ w $'s strategy. Moreover $ C'\in \mathcal{C} $ is a strategy profile for all firms, where for each $ f\in F $, $ C'_f $ is the $ f $'s strategy.  A \textbf{strategy profile} for all agents is $ (P',C')\in \mathcal{P}\times \mathcal{C}. $ A \textbf{(matching) rule} is a function  $\psi:  \mathcal{P}\times \mathcal{C} \rightarrow \mathcal{M} $  that selects for each strategy profile $ (P',C') \in \mathcal{P}\times \mathcal{C}  $ a matching $ \psi (P',C')\in \mathcal{M}. $  A matching rule $ \psi $ is \textbf{stable} if $ \psi(P',C')\in S(P',C') $ for each $ (P',C')\in \mathcal{P}\times \mathcal{C}.$ 

Given  $  ( P,C)\in \mathcal{P}\times \mathcal{C} $ and a stable rule $ \psi$ the \textbf{(matching) game} induced by $   (P,C) $ and $ \psi $ is denoted by $ (\psi, P,C). $ A strategy profile $(P',C')  $ is a \textbf{Nash equilibrium} of $ (\psi,P,C) $, if no agent can achieve a better outcome deviating from her strategy, assuming that the other agents do not deviate from the strategy profile $(P',C')  $.  Formally,

\begin{definition}
Let be $(\psi, P,C) $ the game induced by $ (P,C) $ and $ \psi $. A strategy profile $ (P',C') $ is a \textbf{Nash equilibrium} of $(\psi, P,C) $  if for each $ w\in W $, $\psi(P',C')(w)R_w\psi( \widehat{P}_w,P'_{-w},C')(w) $ for each strategy $ \widehat{P}_w  $ of $ w $, and for each $ f\in F $, $ \psi(P',C')(f)\succeq_f\psi(P',\widehat{C}_{f},C'_{-f})  (f)$ for each strategy  $\widehat{C}_f $ of $ f $.
\end{definition}

\begin{definition}\label{defimple}

 We say that the game $ (\psi,P,C) $ \textbf{implements} the social choice correspondence $ \Psi $ in Nash equilibrium if,
\begin{enumerate}[(i)]
\item     for each Nash equilibrium  $ (P',C') $ of $(\psi, P,C) $, $ \psi(P',C')\in \Psi(P,C) $, 

 \item  for each matching $ \mu \in \Psi(P,C) $ there is a Nash equilibrium $ (P',C') $ of the game $ (\psi,P,C) $ such that $ \psi(P',C')=\mu. $
 \end{enumerate}
 \end{definition}

Next, we show that any stable matching rule implements the individually rational correspondence in Nash equilibrium.\footnote{This result generalizes the result first presented by \citet{alcalde1996implementation} for the marriage market and then extended by \citet{Sotomayor2012} for the many-to-one matching market with responsive preferences. }

\begin{theorem}\label{ToremaIR}
Let $ (P',C')$ be a market and let  $ \psi:\mathcal{P}\times \mathcal{C}\rightarrow \mathcal{M} $ be a stable matching rule. Then, the game $ (\psi, P,C) $ implements the individually rational correspondence  $ \mathcal{I} $ in Nash equilibrium.
\end{theorem}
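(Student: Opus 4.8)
The plan is to verify the two requirements of Definition~\ref{defimple} directly: for part (ii) I would, given any $\mu\in I(P,C)$, hand-build a stated profile $(P',C')$ whose \emph{unique} stable matching is $\mu$ (so $\psi(P',C')=\mu$), and then check it is a Nash equilibrium of $(\psi,P,C)$; for part (i) I would argue contrapositively, producing a profitable deviation whenever the equilibrium outcome fails individual rationality under the true profile.

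\emph{Part (i).} Let $(P',C')$ be a Nash equilibrium of $(\psi,P,C)$ and set $\mu=\psi(P',C')$. If $\mu$ is blocked by a worker $w$ under $P_w$, i.e. $\emptyset\,P_w\,\mu(w)$, I let $\widehat P_w$ declare every firm unacceptable; since $\psi$ is stable, $\psi(\widehat P_w,P'_{-w},C')$ is individually rational for the stated market, so $w$ is unmatched there, a strict improvement for $w$ under $P_w$ — contradiction. If $\mu$ is blocked by a firm $f$, i.e. $C_f(\mu(f))\neq\mu(f)$, I let $f$ deviate to $\widehat C_f(S)=S\cap C_f(\mu(f))$ (a substitutable, indeed consistent, hence legitimate choice function); writing $\nu=\psi(P',\widehat C_f,C'_{-f})$, individual rationality gives $\nu(f)=\widehat C_f(\nu(f))\subseteq C_f(\mu(f))\subseteq\mu(f)$, so $\mu(f)\cup\nu(f)=\mu(f)$ and $C_f(\mu(f)\cup\nu(f))=C_f(\mu(f))\neq\mu(f)$, i.e. $\mu(f)\not\succeq_f\nu(f)$ — again contradicting the equilibrium condition. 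Hence $\mu\in I(P,C)$.

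\emph{Part (ii).} Fix $\mu\in I(P,C)$. For each $w$ let $P'_w$ be the preference whose only acceptable firm is $\mu(w)$ (nothing acceptable if $\mu(w)=\emptyset$), and for each $f$ let $C'_f(S)=S\cap\mu(f)$, which is substitutable and consistent. I would first show $S(P',C')=\{\mu\}$: $\mu$ is individually rational and has no blocking pair under $(P',C')$ by construction, and any $\nu\in S(P',C')$ satisfies $\nu(w)\in\{\mu(w),\emptyset\}$ and $\nu(f)\subseteq\mu(f)$ by individual rationality, so $\nu\neq\mu$ would give a pair $f,w$ with $\mu(w)=f\neq\emptyset$ and $w\notin\nu(f)$, and $(f,w)$ blocks $\nu$ — contradiction. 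Since $\psi$ is stable, $\psi(P',C')=\mu$. For the equilibrium check: if $w$ deviates to $\widehat P_w$ and $\nu=\psi(\widehat P_w,P'_{-w},C')$ gave $\nu(w)=f$ with $f\,P_w\,\mu(w)$, then $f$ still plays $C'_f$, so individual rationality forces $w\in\nu(f)\subseteq\mu(f)$, i.e. $f=\mu(w)$, impossible; thus $\mu(w)\,R_w\,\nu(w)$ (with $\nu(w)=\emptyset$ handled by $\mu(w)\,R_w\,\emptyset$, valid since $\mu\in I(P,C)$). If $f$ deviates to $\widehat C_f$ and $\nu=\psi(P',\widehat C_f,C'_{-f})$, then every worker still plays $P'_w$, so individual rationality forces $\nu(f)\subseteq\mu(f)$, whence $C_f(\mu(f)\cup\nu(f))=C_f(\mu(f))=\mu(f)$ (last equality since $\mu\in I(P,C)$), i.e. $\mu(f)\succeq_f\nu(f)$. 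So $(P',C')$ is a Nash equilibrium with $\psi(P',C')=\mu$.

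The routine part will be confirming that the hand-built choice functions $S\mapsto S\cap X$ are substitutable and consistent and that the truncated worker preferences are well defined. The step needing the most care is the firm side of both parts: one must reason with the Blair relation $\succeq_f$ rather than a linear order, and keep straight that $\psi$ always returns a matching stable for the \emph{stated} profile while the equilibrium comparisons use the agents' \emph{true} $P_w$ and $C_f$. In every case the leverage is the same: individual rationality of $\psi$'s output at the deviated profile confines the relevant firm's assignment to a subset of $\mu(f)$ (or of $C_f(\mu(f))$), collapsing the union and trivializing the comparison.
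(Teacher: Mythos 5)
Your proof is correct and follows essentially the same route as the paper: both parts verify Definition \ref{defimple} directly, and in part (ii) you construct exactly the profile the paper uses ($C^{\star}_f(S)=S\cap\mu(f)$ together with the truncated worker preferences $\mu(w),\emptyset$). The only differences are cosmetic --- in part (i) you have the blocking firm deviate to $S\mapsto S\cap C_f(\mu(f))$ where the paper has it declare every set of workers unacceptable, and in part (ii) you establish $S(P',C')=\{\mu\}$ by a direct individual-rationality/blocking-pair argument rather than by running the deferred-acceptance algorithm from both sides.
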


\begin{proof}
Let $ (P,C) $ be a market and $ \psi $ a stable matching rule. In order to show that the game $ (\psi,P,C) $ implements $ \mathcal{I} $ in Nash equilibrium, we need to prove items \textit{(i)} and \textit{(ii)} of Definition \ref{defimple}.
\begin{enumerate}
\item [$\boldsymbol{(i)}$] Let $ (P',C') $ be a Nash equilibrium of the game $(\psi,P,C)$. We prove that $ \psi(P',C')\in \mathcal{I}(P,C)$. Assume that $ \psi(P',C')$ is blocked by a worker $ w $, $ \emptyset P_w \psi(P',C')(w)$. Then the worker can improve by choosing the strategy in which no firm is acceptable i.e., $ P''_w=\emptyset$.  Thus, $ \psi(P''_w,P'_{-w},C')(w)P_w\psi(P',C')(w)$ contradicting that $ (P',C') $ is a Nash equilibrium of the game. Hence, $ \psi(P',C') $ is not blocked by any worker under $ (P,C) $. Now, we prove that $ \psi(P',C')$ is not blocked by any firm. Let $ f \in F$ since $ (P',C') $ is a Nash equilibrium, for each strategy $ C''_{f} $ we have that 
\begin{equation}\label{defENF}
 \psi(P',C')(f)= C_f\left(\psi(P',C')(f)\cup \psi(P',C''_f,C'_{-f})(f) \right) .  
\end{equation}
Choose a choice function  $ C''_{f} $ such that $C''_{f}(W)=\emptyset $. Since $ \psi $ is a stable matching rule, $ \psi( P',C''_{f},C'_{-f})(f)=\emptyset $. Hence, (\ref{defENF}) becomes $ \psi(P',C')(f)= C_f\left(\psi(P',C')(f)\right)  $ and $ \psi(P',C') $ is not blocked by any firm under $ (P,C) $. Therefore, $ \psi(P',C')\in \mathcal{I}(P,C) $. 
 
 \item [$\boldsymbol{(ii)}$] We need to prove that for each $ \mu \in \mathcal{I}(P,C) $ there is a Nash equilibrium $  (P^{\star},C^{\star}) $ of $ (\psi,P,C) $ such that $ \psi( P^{\star},C^{\star})=\mu$. In order to do so, given $\mu \in \mathcal{I}(P,C) $, we define $ (P^{\star},C^{\star}) $ in which each firm $ f $ declares $C^{\star}_f$ such that:
 
\begin{itemize}
 \item[$\diamond$] for each $S \subseteq W $, $ C^{\star}_f(S)=\mu(f)\cap S. $
 \item[$\diamond$] $C^{\star}_f(S')\subsetneq C^{\star}_f(S)$ for each  $ S'\subsetneq S\subseteq \mu(f)$
\end{itemize}

and each worker $ w $ declares $P_w^{\star}= \mu(w),\emptyset $. It can be verified that $C^{\star}_f$ is substitutable for each firm $ f\in F. $
Now, we need to prove that (1) $ \psi(P^{\star},C^{\star})=\mu $, and (2) $ (P^{\star},C^{\star}) $ is a Nash equilibrium of the game.
\begin{enumerate}[(1)]
\item \textbf{$\boldsymbol{\psi(P^{\star},C^{\star})=\mu }$}. Since $ \psi $ is a stable matching rule under $ (P^{\star},C^{\star}) $, to prove (1) it is enough to show that   $ S(P^{\star},C^{\star})=\lbrace \mu\rbrace $. First, we consider the DA algorithm \footnote{See \cite{roth1992two}  for more details.} when workers make offers under  $ (P^{\star},C^{\star}) $. On the one hand, each unemployed worker under  $ \mu$ makes no offers in any step of the DA algorithm, that is, it is an unemployed worker under $ \mu_W(P^{\star},C^{\star})$. On the other hand, each worker $ w$ assigned to a firm $ f $ under $ \mu$ makes an offer to the firm $ f $. Since $ C^{\star}_{f}\left(  \mu(f)\right) =\mu(f)$, firm $f$ accepts all offers received. So $ \mu_W(P^{\star},C^{\star})= \mu.$ Second, we consider the DA algorithm when firms make offers under $ (P^{\star},C^{\star}) $.  Notice that each unmatched firm under $ \mu $ makes no offers in any step of the DA algorithm, i.e., it is an unmatched firm under $ \mu_F(P^{\star},C^{\star})$. For each matched firm under $ \mu $, $ C^{\star}_f(W)= \mu(f)$,  so it makes offers to each worker $ w $ in  $ \mu(f) $. Since each worker $ w$ in $ \mu(f) $ has only firm $ f $ as acceptable, she accepts it. So $ \mu_F(P^{\star},C^{\star})= \mu.$ Therefore, $ \mu_F(P^{\star},C^{\star})=\mu_W(P^{\star},C^{\star})=\mu $, i.e. $ \mu $ is the unique stable matching under $(P^{\star},C^{\star})$.

 \item \textbf{$\boldsymbol{(P^{\star},C^{\star})}$} \textbf{is a Nash equilibrium of }$ \boldsymbol{ (\psi,P,C) }.$ Otherwise, there are two cases to consider:

\textbf{Case (2.1):} \textbf{ There is a firm $\boldsymbol{ f\in F} $ and  a choice function $ \boldsymbol{\widehat{C}_f}   $ such that }

$\boldsymbol{ \psi(P^{\star},C^{\star})(f)  \succeq_f \psi( P^{\star},\widehat{C}_{f},C^{\star}_{-f })(f) }$ \textbf{ does not hold.}  The strategy profile $ (\widehat{C}_{f},C^{\star}_{-f }) $ is denoted by $ \widehat{C} $. 
Thus, $\psi(P^{\star},C^{\star}) (f)\neq C_{f}\left( \psi(P^{\star},C^{\star})(f)\cup \psi( P^{\star},\widehat{C})(f)\right) . $
Then, there is 
$ 
w'\in C_{f}\left( \psi(P^{\star},C^{\star})(f)\cup \psi( P^{\star},\widehat{C})(f)\right)$ such that   $ w'\notin  \psi(P^{\star},C^{\star}) (f)=\mu(f).$
Hence, $ w'\in \psi(P^{\star},\widehat{C})(f) $  and $ w'\notin \mu(f). $ Thus, definition of $ P^{\star} $ implies 
 $$ 
 \mu(w')P^{\star}_{w'}\emptyset P^{\star}_{w'}f=\psi(P^{\star},\widehat{C})(w').
 $$ 
  Contradicting $\psi(P^{\star},\widehat{C})\in I(P^{\star},\widehat{C})$.  

\textbf{Case (2.2):} \textbf{ There is a worker $\boldsymbol{ w\in W} $ and  a strategy  $ \boldsymbol{\widehat{P}_w } $ such that\\ $\boldsymbol{\psi(P^{\star},C^{\star})(w)R_w \psi(\widehat{P}_w,P^{\star}_{-w},C^{\star}_{f}) (w)} $ does not hold.}  The strategy profile $ (\widehat{P}_w,P^{\star}_{-w}) $ is denoted by $ \widehat{P} $. Thus,
\begin{equation}\label{desP}
\psi(\widehat{P},C^{\star}_{f})(w)P_w \psi(P^{\star},C^{\star})(w).
\end{equation}

 Let $f' \in F$ such that $f'=\psi(\widehat{P},C^{\star}_{f})(w)$. By (1), $\psi(P^{\star},C^{\star})(w)=\mu(w)$  and (\ref{desP}) becomes $f' P_w \mu(w). $ The fact that $ w\notin \mu(f') $ and the definition of  $ C^{\star}$ imply that 
 $  C^{\star}_{f'}\left(  \psi(\widehat{P},C^{\star})(f')\right)\neq \psi(\widehat{P},C^{\star})(f') $ contradicting $\psi(\widehat{P},C^{\star})\in I(\widehat{P},C^{\star})$. 

\end{enumerate}
By Cases (2.1) and (2.2), $(P^{\star},C^{\star})$ is a Nash equilibrium of $ (\psi,P,C). $
\end{enumerate}

Therefore by \textit{(i)} and \textit{(ii)}, $ (\psi,P,C) $ implements in Nash equilibrium the individually rational correspondence $ \mathcal{I} $.
\end{proof}

The existence of Nash equilibria follows from Theorem \ref{ToremaIR}. A natural question is which is the game that implements stable matchings in Nash equilibrium. The answer to this question is the focus of the next subsection.

 \subsection{The stable correspondence}\label{s2}

Now consider a matching game in which the players are only the workers. Given the profile of choice functions $ C $, a \textbf{social choice correspondence}  $  \Psi^{C}: \mathcal{P}  \rightarrow 2^{\mathcal{M}}$ selects, for each strategy profile $ P'\in \mathcal{P}  $, a subset of matchings $ \Psi^{C}(P')\subseteq \mathcal{M}.  $  The \textbf{stable correspondence} $ \mathcal{S}^{C}:\mathcal{P}  \rightarrow 2^{\mathcal{M}} $ selects for each strategy profile $ P'\in \mathcal{P} $, the set of stable matchings $ S(P',C) $. 
 Given the profile of choice functions $ C $, a \textbf{(matching) rule} is a function  $\psi^{C}:  \mathcal{P} \rightarrow \mathcal{M} $  that selects for each strategy profile $ P' \in \mathcal{P}  $ a matching $ \psi^{C} (P')\in \mathcal{M}. $  A rule $ \psi^{C} $ is \textbf{stable} if $ \psi(P')\in S(P',C) $ for each $ (P',C)\in \mathcal{P}\times \mathcal{C}.$ 
 
Given the profile of choice functions $ C $, \textbf{the firm-optimal stable rule} $ \psi^{C}_F$ is a function  $\psi_F^{C}:   \mathcal{P} \rightarrow \mathcal{M} $  that selects for each strategy profile $ P' \in \mathcal{P}$ the firm-optimal stable matching $ \psi^C_F (P') $  under $ (P',C). $

Given  $  P\in \mathcal{P} $ and the firm-optimal stable rule $ \psi^{C}_F$ the \textbf{(matching) game} induced by $   P $ and $ \psi^{C}_F $ is denoted by $ (\psi^{C}_F, P). $ A strategy profile $P'  $ is a \textbf{Nash equilibrium} of $(\psi^{C}_F, P)$, if no worker can achieve a better outcome deviating from her strategy, assuming that the other workers do not deviate from the strategy profile $P'  $.  Formally,

\begin{definition}
Let be $(\psi^{C}_F, P)$ the game induced by $ P $ and the stable rule $ \psi^{C}_F $. A strategy profile $ P' $ is a \textbf{Nash equilibrium} of $(\psi^{C}_F, P)$  if for each $ w\in W $, $\psi(P',C')(w)R_w\psi( \widehat{P}_w,P'_{-w},C')(w) $ for each strategy $ \widehat{P}_w  $ of $ w $.
\end{definition}

\begin{definition}\label{defimple2}

 We say that the game $(\psi^{C}_F, P)$ \textbf{implements} the social choice correspondence $ \Psi^{C} $ in Nash equilibrium if,
\begin{enumerate}[(i)]
\item     for each Nash equilibrium  $ P' $ of $(\psi^{C}_F, P)$, $ \psi^{C}_F(P')\in  \Psi^{C}(P) $, 

 \item  for each matching $ \mu \in  \Psi^{C}(P) $ there is a Nash equilibrium $ P' $ of the game $(\psi^{C}_F, P)$ such that $ \psi^{C}_F(P')=\mu. $
 \end{enumerate}
 \end{definition}

By means of an additional condition on firms' choice functions, we can implement the stable correspondence in Nash equilibrium. This additional condition is 
the “law of aggregate demand”, which says that when a firm chooses from an expanded set, it hires at least as many workers as before. Formally,

\begin{definition}
Choice function $ C_f $ satisfies \textbf{ law of aggregate demand (LAD)} if $S'\subseteq S\subseteq W$ implies $|C_f(S')|\leq |C_f(S)|.$
\end{definition}

 We denote by $ \mathcal{C}_{LAD}$  the set of all firms' choice functions $ C\in \mathcal{C} $ such that $C_f$ satisfies LAD for each $ f\in F $. The following theorem asserts that the firm-optimal stable rule implements, in Nash equilibrium, the stable correspondence. 

\begin{theorem}\label{S(P)=NE}
Let $ (P,C) $ be a market, $ C\in \mathcal{C}_{LAD}, $ and let  $ \psi^{C}_F $ be the firm-optimal stable matching rule. Then, the game $ (\psi^{C}_F, P) $ implements the stable correspondence $ \mathcal{S}^{C} $ in Nash equilibrium.  
\end{theorem}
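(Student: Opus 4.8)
The plan is to verify the two conditions of Definition~\ref{defimple2}: (i) every Nash equilibrium of $(\psi^{C}_F,P)$ is mapped by $\psi^C_F$ into $S(P,C)$, and (ii) every stable matching $\mu\in S(P,C)$ is attained at some Nash equilibrium.

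For part (ii), which I expect to be the easier direction, I would proceed as in the proof of Theorem~\ref{ToremaIR}: given $\mu\in S(P,C)$, let each worker $w$ report the truncated preference $P^\star_w=\mu(w),\emptyset$ (only $\mu(w)$ acceptable, or the null list if $w$ is unmatched), keeping the true $C$ fixed. First I would show $\psi^C_F(P^\star)=\mu$: running the DA algorithm with firms making offers under $(P^\star,C)$, each firm can only ever hold workers that find it acceptable, i.e. its own partners under $\mu$; using that $\mu$ is stable (hence individually rational for firms, so $C_f(\mu(f))=\mu(f)$) one shows the unique stable matching under $(P^\star,C)$ is $\mu$, so in particular $\psi^C_F(P^\star)=\mu$. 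Then I would check $P^\star$ is a Nash equilibrium: if some $w$ could deviate to $\widehat P_w$ and obtain a firm $f'=\psi^C_F(\widehat P_w,P^\star_{-w})(w)$ with $f' P_w\mu(w)$, then in particular $f'\neq\mu(w)$, so $f'$ finds no other worker of $\widehat P$ — wait, rather: since $w$'s true preference has $\mu(w)$ on top among acceptable firms and here the outcome is strictly better than $\mu(w)$, and since the other workers still play $P^\star_{-w}$, this would produce a blocking pair or an individual-rationality violation for the market $(\widehat P, C)$, contradicting stability of $\psi^C_F(\widehat P,P^\star_{-w})$. I would mirror the Case~(2.2) argument of Theorem~\ref{ToremaIR} here. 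Note LAD is not needed for part (ii).

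Part (i) is the substantive direction and where LAD enters. Let $P'$ be a Nash equilibrium and set $\nu=\psi^C_F(P',C)$; I must show $\nu\in S(P,C)$. Since $\psi^C_F$ is a stable rule, $\nu\in S(P',C)$, so $\nu$ is already individually rational for every firm under its true $C$, and — arguing as in Theorem~\ref{ToremaIR}(i), a worker who is matched to an unacceptable (under true $P$) firm could deviate to the empty list and the stable rule would leave her unmatched, improving her — $\nu$ is individually rational for all workers under true $P$ as well. So $\nu\in I(P,C)$, and it remains to rule out a blocking pair $(f,w)$ for $(P,C)$: suppose $w\in C_f(\nu(f)\cup\{w\})$ and $f\,P_w\,\nu(w)$. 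The idea is to let $w$ deviate to the truncated strategy $\widehat P_w$ that lists exactly the firms she strictly prefers to $\nu(w)$, in her true order, followed by $\emptyset$ — in particular $f$ appears on $\widehat P_w$. One then analyzes the firm-optimal stable matching $\widehat\nu=\psi^C_F(\widehat P_w,P'_{-w},C)$ and argues that $\widehat\nu(w)$ is a firm that $w$ truly prefers to $\nu(w)$ (hence $w$ has profitably deviated, contradicting equilibrium); equivalently, that $w$ does \emph{not} end up unmatched under $\widehat\nu$. This is exactly the classical truncation / ``rural hospitals''-type argument, and it is here that LAD is indispensable: LAD gives the Rural Hospitals Theorem for substitutable choice functions (the set of unmatched workers, and each $|\mu(f)|$, is invariant across all stable matchings of a fixed market), which is what lets one compare the worker-side and firm-side behavior after the truncation and conclude $w$ cannot be rejected by every acceptable firm on $\widehat P_w$ without contradicting stability of $\nu$ under $(P',C)$.

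The main obstacle is precisely this step in part~(i): showing that the truncation deviation $\widehat P_w$ delivers $w$ a strictly better firm. The clean way is to invoke the lattice structure of $S(\cdot,\cdot)$ from \citet{blair1988lattice} together with the LAD-based Rural Hospitals Theorem (cf. \citet{alkan2002class,hatfield2005matching}): first, standard truncation monotonicity arguments show that if $w$ is matched under $\nu$ in market $(P',C)$ then she remains matched in $(\widehat P_w,P'_{-w},C)$ and to a firm ranked no lower on her true list than $\nu(w)$; second, if $w$ were matched to exactly $\nu(w)$ (or to $\emptyset$, excluded by Rural Hospitals), then — using that $(f,w)$ blocks $\nu$ under $(P,C)$ and that $f$'s true choice function is the one being used — one produces a blocking pair for $\widehat\nu$ in $(\widehat P_w,P'_{-w},C)$, contradicting stability of the firm-optimal stable rule. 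Hence $f' =\widehat\nu(w)$ satisfies $f'\,P_w\,\nu(w)$, contradicting that $P'$ is a Nash equilibrium. I would take care to state the needed consequences of substitutability$+$LAD (the Rural Hospitals Theorem, and the comparative-statics/monotonicity of $\psi^C_F$ under truncations) as explicit lemmas before the proof, since these are the load-bearing facts.
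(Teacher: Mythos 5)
Your part (ii) is essentially the paper's argument (have every worker report $P^{\star}_w=\mu(w),\emptyset$ and show $S(P^{\star},C)=\{\mu\}$, then run a Case (2.2)-style equilibrium check), and it is sound; the only discrepancy is your claim that LAD is not needed there, whereas the paper invokes LAD precisely at that point, via the Rural Hospital Theorem, to conclude $\mu_F(P^{\star})=\mu_W(P^{\star})$. Your alternative — any stable $\nu$ under $(P^{\star},C)$ has $\nu(f)\subseteq\mu(f)$, and a strict inclusion is blocked because $C_f(\mu(f))=\mu(f)$ plus substitutability force $w\in C_f(\nu(f)\cup\{w\})$ for $w\in\mu(f)\setminus\nu(f)$ — does go through, so this is a defensible deviation rather than an error.

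The genuine gap is in part (i), in the one case that carries all the weight: what happens if the deviating worker $w$ ends up \emph{unmatched} under $\widehat{\nu}=\psi^{C}_F(\widehat{P}_w,P'_{-w})$. You dispose of this case with ``excluded by Rural Hospitals'' and ``standard truncation monotonicity shows she remains matched,'' and neither works. The Rural Hospital Theorem compares stable matchings of a \emph{fixed} market; here you must compare stable matchings of two different markets, $(P',C)$ and $(\widehat{P}_w,P'_{-w},C)$, about which it says nothing. And the claim that a worker who truncates strictly above her current match remains matched is false in general (truncating above all of one's achievable partners leaves one single); it can only be rescued by actually exploiting the blocking pair $(f,w)$. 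The paper's route does exactly that, with the simpler deviation $P''_w=f,\emptyset$: if $w$ is unmatched under $\psi^{C}_F(P'')$, stability gives $w\notin C_f\left(\psi^{C}_F(P'')(f)\cup\{w\}\right)$; Theorem 1 of \citet{CRAWFORD1991389} (removing a worker makes every firm weakly worse off, in Blair's order, at the firm-optimal stable matching) gives $C_f\left(\psi^{C}_F(P'')(f)\cup\psi^{C}_F(P')(f)\right)=\psi^{C}_F(P')(f)$; and substitutability plus path independence then yield $w\notin C_f\left(\psi^{C}_F(P')(f)\cup\{w\}\right)$, contradicting that $(f,w)$ blocks. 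That cross-market comparative-statics step is the load-bearing fact your sketch is missing — you gesture at ``comparative-statics/monotonicity under truncations'' but never supply it, and the Rural Hospital Theorem cannot substitute for it. Note also that this argument uses only substitutability: contrary to your framing, LAD is not what makes part (i) work; in the paper it appears only in part (ii).
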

\begin{proof}
Let $ P $ be a preference profile, $ C $ be a profile of choice functions and $\psi^{C}_F$ the firm-optimal stable  rule. In order to show that the game $ (\psi^{C}_F, P) $ implements $ \mathcal{S}^{C} $ in Nash equilibrium, we need to prove items \textit{(i)} and \textit{(ii)} of Definition \ref{defimple2}.
\begin{enumerate}
\item [$\boldsymbol{(i)}$]  Let $P'$ be a Nash equilibrium of the game $ (\psi^{C}_F,P) $.  We prove that $\psi^{C}_F(P') \in \mathcal{S}^{C}(P).$  An argument analogous to the proof of Theorem \ref{ToremaIR} \textit{(i)}, shows that $ \psi^{C}_F $ is not blocked by any worker. Since $ \psi^{C}_F $ is a stable  rule under $ (P,C) $, $ \psi^{C}_F $ is not blocked by any firm.  Therefore $ \psi^{C}_F(P')\in I(P,C). $
Assume that $ \psi^{C}_F (P') \notin S(P,C)$. Thus, there is a blocking pair $(f,w) $ for  $ \psi^{C}_F (P'),$  
\begin{equation}\label{bloqueoh(Q)}
 w\in C_f \left(  \psi^{C}_F (P')(f)\cup\left\lbrace w\right\rbrace  \right) ~\text{and}~  fP_w  \psi^{C}_F (P')(w). 
\end{equation}

Now, consider the strategy profile  $ P''=(P''_{w},P'_{-w}) $ where $ P''_w=f,\emptyset $ . We claim that $  \psi^{C}_F (P'')(w)=f$. Otherwise, since $\psi^{C}_F (P'')\in I(P'',C)$, $ \psi^{C}_F (P'')(w)=\emptyset $.  The definition of  $ P''$ implies that
$fP''_{w} \psi^{C}_F (P'')(w)=\emptyset.$ If  $w\in C_{f} \left( \psi^{C}_F (P'')(f)\cup \left\lbrace w\right\rbrace\right) ,$ then $ (f,w) $ blocks $ \psi^{C}_F (P'') $ under $ (P'',C) $,
contradicting that $\psi^{C}_F (P'')\in S(P'',C)$. Thus, $w\notin C_{f} \left(\psi^{C}_F (P'')(f)\cup \left\lbrace w\right\rbrace\right) .$  Then, by substitutability, \\
$ w\notin C_{f} \left( \psi^{C}_F (P'')(f)\cup \psi^{C}_F (P')(f) \right)$. Applying Theorem 1 in  \citet{CRAWFORD1991389},\footnote{The version of this theorem for a many-to-one matching market, where all firms have substitutable choice functions, states that by removing one or more workers from the market, the firm-optimal stable matching rule in the new reduced market is weakly worse for all firms than the firm-optimal stable matching rule in the original matching market. In this case, we consider the reduced market $ (F,W\setminus\lbrace w\rbrace, P'_{-w}) $.} 
 \begin{equation}\label{b3} 
w\notin C_{f} \left( \psi^{C}_F (P'')(f)\cup \psi^{C}_F (P')(f) \right)=\psi^{C}_F (P')(f).
\end{equation}

Now, using (\ref{propiedad de choice}) and \eqref{b3} we get
$$
w\notin C_{f} \left( \psi^{C}_F (P'')(f)\cup \psi^{C}_F (P')(f) \right)=C_{f} \left( C_{f} \left( \psi^{C}_F (P'')(f)\cup  \psi^{C}_F (P')(f) \right)\cup \left\lbrace w\right\rbrace\right)
$$ 
 $$= C_f \left( \psi^{C}_F (P')(f)\cup\left\lbrace w\right\rbrace  \right).
 $$
 Thus, $ w\notin C_f \left( \psi^{C}_F (P')(f)\cup\left\lbrace w\right\rbrace  \right) $, but this contradicts \eqref{bloqueoh(Q)} and the claim is proven. Therefore, $\psi^{C}_F (P'')(w)=f .$ By (\ref{bloqueoh(Q)}), we have $  \psi^{C}_F (P'')(w)P_w \psi^{C}_F (P')(w)$, then $P' $ is not a Nash equilibrium, contradicting our hypothesis. Therefore $\psi^{C}_F (P')\in \mathcal{S}^{C}(P). $

 \item [$\boldsymbol{(ii)}$] We need to prove that for each $ \mu \in \mathcal{S}^{C}(P) $ there is a Nash equilibrium $  P^{\star} $ of $ (\psi^{C}_F,P) $ such that $ \psi^{C}_F( P^{\star})=\mu$.  In order to do so, given $\mu \in \mathcal{S}^{C}(P) $ we define  $ P^{\star}$ in which each worker $ w $ declares $P^{\star}_w= \mu(w),\emptyset $.  Now, we need to prove that (1) $ \psi^{C}_F(P^{\star})=\mu $, and (2) $ P^{\star} $ is a Nash equilibrium of the game.
\begin{enumerate}[(1)]
\item \textbf{$\boldsymbol{\psi^{C}_F(P^{\star})=\mu }$}. Since $ \psi^{C}_F $ is a stable  matching rule under $ (P^{\star},C) $, to prove (1) is suffices to show that   $ S(P^{\star},C)=\lbrace \mu\rbrace $. 

Following similar reasoning to the ones in the proof of Theorem \ref{ToremaIR}, the item \textit{(ii)} part (1), $ \mu_W(P^{\star})=\mu $. 
 Assume that $  \mu_W(P^{\star})\neq \mu_F(P^{\star})$, then there is $ w\in W $ such that $ \mu_W(P^{\star})(w)\neq \mu_F(P^{\star})(w). $ Since $ \mu(w) $ is the only acceptable firm under $ P^{\star} $, $  \mu_F(P^{\star})(w)=\emptyset $, but this contradicts the Rural Hospital Theorem \footnote{The \textit{Rural Hospital Theorem} is proven in different contexts by many authors \citep[see][among others]{mcvitie1970stable,roth1984evolution,roth1985college,martinez2000single,alkan2002class,kojima2012rural}.  The version of this theorem for a many-to-many matching market where all agents have substitutable choice functions satisfying \emph{LAD}, that also applies in our setting, is presented in \citet{alkan2002class} and states that each agent is matched with the same number of partners in every stable matching.}. Therefore $ \mu_W(P^{\star})=\mu_F(P^{\star})=\mu, $ i.e. $ \mu $ is the unique stable matching under $ (P^{\star},C) $.

\item \textbf{$\boldsymbol{P^{\star}}$ is a Nash equilibrium of} $ \boldsymbol{(\psi^{C}_F,P).} $ Otherwise, there is a worker $\boldsymbol{ w\in W} $ and  a strategy  $ \boldsymbol{\widehat{P}_w } $ such that $\boldsymbol{\psi^{C}_F(P^{\star})(w)R_w \psi_F(\widehat{P}_w,P^{\star}_{-w})(w) } $ does not hold. The strategy profile $(\widehat{P}_w,P^{\star}_{-w}) $ is denoted by $ \widehat{P}.$ Thus,

\begin{equation}\label{sino}
\psi^{C}_F(\widehat{P})(w)P_w \psi^{C}_F(P^{\star})(w).
\end{equation}
Since $\psi^{C}_F(P^{\star})(w)=\mu(w)$ and $ \mu \in I(P,C)$, there is $f' \in F$ such that $f'=\psi^{C}_F(\widehat{P})(w)$. If $ w\in C_{f'}\left( \psi^{C}_F (P^{\star})(f')\cup\lbrace w\rbrace\right) $, this together with \eqref{sino} imply that $ (f',w) $ blocks $ \mu $ under $ (P,C) $. This is a contradiction.  Therefore, 

\begin{equation}\label{wnocho}
 w\notin C_{f'}\left( \psi^{C}_F(P^{\star})(f')\cup \lbrace w\rbrace \right) .
 \end{equation}
We claim that 
\begin{equation}\label{inc}
\psi^{C}_F(\widehat{P})(f')\subsetneq \psi^{C}_F(P^{\star})(f')\cup \lbrace w\rbrace .
\end{equation}

 Let $ \widehat{w} \neq w$ such that $ \widehat{w}\in \psi^{C}_F(\widehat{P})(f') $. Definition of $ \widehat{P} $ and $ \psi^{C}_F (\widehat{P})\in I(\widehat{P})$ imply that $f'= \psi^{C}_F (\widehat{P})(\widehat{w})=\mu(\widehat{w})=\psi^{C}_F(P^{\star})(\widehat{w}). $ Thus, $ \widehat{w}\in  \psi^{C}_F(P^{\star})(f').$ Now, assume that 
 
\begin{equation}\label{igualdad}
 \psi^{C}_F(\widehat{P})(f')=\psi^{C}_F(P^{\star})(f')\cup \lbrace w\rbrace .
 \end{equation}
 
 Using, $ \psi^{C}_F(\widehat{P})\in I(\widehat{P},C)$, definition of $\widehat{P}$ and \eqref{igualdad}, it follows that 
$$
\psi^{C}_F(\widehat{P})(f')=C_{f'}\left(\psi^{C}_F(\widehat{P})(f')\right) =C_{f'}\left(\psi^{C}_F(P^{{\star}})(f')\cup \lbrace w\rbrace\right).
$$
  Using, \eqref{wnocho} and $ \psi^{C}_F(P^{\star})\in I(P^{\star},C) $, it follows that 
$$
 C_{f'}\left( \psi^{C}_F(P^{\star})(f')\cup \lbrace w\rbrace\right)=C_{f'}\left( \psi^{C}_F(P^{\star})(f')\right)=\psi^{C}_F(P^{\star})(f').
$$

Hence, $\psi^{C}_F(\widehat{P})(f')=\psi^{C}_F(P^{\star})(f') $ contradicting \eqref{sino}. This completes the proof of the Claim. Therefore, there is $ w^{\star} \in \psi^{C}_F(P^{\star})(f')\setminus \psi^{C}_F(\widehat{P})(f')$. If
  $ w^{\star}\notin C_{f'}\left( \psi^{C}_F(\widehat{P})(f')\cup \lbrace w^{\star}\rbrace \right) $. Using \eqref{inc}, substitutability and definition of $ \widehat{P}$ ,  $w^{\star}\notin C_{f'}\left( \psi^{C}_F(P^{\star})(f')\cup \lbrace w\rbrace \right)$. Using \eqref{wnocho} and $ \psi^{C}_F(P^{\star})\in I(P^{\star},C) $ we have that, 
  $$ 
w^{\star}\notin C_{f'}\left( \psi^{C}_F(P^{\star})(f')\cup \lbrace w\rbrace \right)=C_{f'}\left( \psi^{C}_F(P^{\star})(f') \right)=\psi^{C}_F(P^{\star})(f'). 
  $$
 Thus, $ w^{\star}\notin \psi^{C}_F(P^{\star})(f')$ contradicting that $ w^{\star} \in \psi^{C}_F(P^{\star})(f')\setminus \psi^{C}_F(\widehat{P})(f')$. Then, 
 \begin{equation}\label{blok}
 w^{\star}\in C_{f'}\left( \psi^{C}_F( \widehat{P})(f')\cup \lbrace w^{\star}\rbrace \right).
\end{equation} 
The fact that $ \mu(w^{\star})=\psi^{C}_F(P^{\star})(w^{\star})\neq \psi^{C}_F(\widehat{P})(w^{\star}) $ and the definition of $ \widehat{P}$ imply that, $f'=\mu(w^{\star})P^{\star}_{w^{\star}}\psi^{C}_F(\widehat{P})(w^{\star})$. This together with  \eqref{blok} imply that $ (f',w^{\star}) $ blocks $ \psi^{C}_F(\widehat{P}) $ under $ (\widehat{P},C)$, contradicting the stability of  $ \psi^{C}_F(\widehat{P}) $ under $ (\widehat{P},C)$. 

Therefore, $P^{\star}$ is a Nash equilibrium.

\end{enumerate}
\end{enumerate}
Finally, by \textit{(i)} and \textit{(ii)}, $ (\psi^{C}_F,P) $ implements in Nash equilibrium the stable correspondence $ \mathcal{S}^{C} $.
\end{proof}

\section{Concluding Remarks}\label{Concludings}

The main motivation of this paper is to provide a framework to study the Nash equilibrium solutions of the game induced by stable rules. In a many-to-one matching market with substitutable choice functions, we show that any stable matching rule implements, in Nash equilibrium, the individually rational matchings. Moreover, when only workers play strategically and firms' choice functions satisfy the law of aggregate demand, we show that the firm-optimal stable rule $ \psi^{C}_F $ implements the stable correspondence in Nash equilibrium. The analogous result with workers telling the truth and firms acting strategically does not hold. That is, if we consider a game in which the players are only the firms, we cannot implement the stable correspondence in Nash equilibrium.  This fact was already noted by \cite{Sotomayor2012}, even under a more restrictive model with responsive preferences.

It is usual in the literature to study many-to-one markets assuming that firms’ preferences are responsive. This is due to the close relationship between this market with responsive preferences and the marriage market , \citep [see]  [for a thorough survey on this fact]{roth1992two} . However, when firms are endowed with substitutable preferences (a much less restrictive requirement), this relation with the marriage market no longer holds.  Thus, extending the results of a many-to-one market with responsive preferences to substitutable choice functions is not straightforward.

The study of the implementability of several solution concepts under other equilibria notions is an interesting topic for future research.


\end{document}